\newtheorem{remark}{Remark}
\theoremstyle{approximation}
\newtheorem{proposition}{Proposition}
\newcommand{\sir}{\mathrm{SIR}}
\newcommand{\Pb}{\mathbb{P}}
\newcommand{\Eb}{\mathbb{E}}
\newcommand{\Lc}{\mathcal{L}}
\newcommand{\Eq}[1]{(\ref{eq:#1})}
\newcommand{\App}[1]{Appendix~\ref{app:#1}}
\newcommand{\Fig}[1]{Fig.~\ref{fig:#1}}
\newcommand*{\acro}[3][]{\newacronym[#1]{#2}{#2}{#3}}
\begin{document}
\title{Cooperative Transmission and Probabilistic Caching for Clustered D2D Networks}

\author{\IEEEauthorblockN{Ramy Amer\IEEEauthorrefmark{1},			
Hesham ElSawy\IEEEauthorrefmark{2},
Jacek Kibi\l{}da\IEEEauthorrefmark{1},
M. Majid Butt\IEEEauthorrefmark{1}\IEEEauthorrefmark{3},
Nicola Marchetti\IEEEauthorrefmark{1}}\\
\IEEEauthorblockA{\IEEEauthorrefmark{1}CONNECT, Trinity College, University of Dublin, Ireland}\\
\IEEEauthorblockA{\IEEEauthorrefmark{2}King Fahd University of Petroleum and Minerals (KFUPM)}\\
\IEEEauthorblockA{\IEEEauthorrefmark{3}Nokia Bell Labs, France}

\IEEEauthorblockA{email:\{ramyr, majid.butt, kibildj, nicola.marchetti\}@tcd.ie, hesham.elsawy@kfupm.edu.sa}}

\maketitle
\begin{abstract}
In this paper, we aim at maximizing the cache offloading gain for a clustered \ac{D2D} caching network by exploiting probabilistic caching and cooperative transmission among the cluster devices. Devices with surplus memory probabilistically cache a content from a known library. A requested content is either brought from the device's local cache, cooperatively transmitted from catering devices, or downloaded from the macro base station as a last resort. Using stochastic geometry, we derive a closed-form expression for the offloading gain and formulate the offloading maximization problem. In order to simplify the objective function and obtain analytically tractable expressions, we derive a lower bound on the offloading gain, for which a suboptimal solution is obtained when considering a special case. 
 Results reveal that the obtained suboptimal solution can achieve up to $12\%$ increase in the offloading gain compared to the Zipf's caching technique. Besides, we show that the spatial scaling parameters of the network, e.g., density of clusters and distance between devices in the same cluster, play a crucial role in identifying the tradeoff between the content diversity gain and the cooperative transmission gain.
\end{abstract}
\begin{IEEEkeywords}
\ac{D2D} communication, caching, clustered-process, CoMP. 
\end{IEEEkeywords}
\section{Introduction}
The deployment of low power \acp{BS} such as micro-, pico-, and femto-\acp{BS} provide short-range communication links and results in a higher density of spatial reuse of radio resources and thus in higher overall network throughput. However, deploying such a dense heterogeneous network comes with its own challenges. One such a challenge is the deployment cost associated with connecting all the small cells to the backbone with fast links. Motivated by this, caching finite popular files at mobile devices or access points in advance is considered a promising technique to relieve the overloaded network traffic.  We are here particularly interested in device caching and \ac{D2D} communication. 

The architecture of device caching exploits the large storage available in modern smartphones to cache multimedia files that might be highly demanded by the devices. Devices can exchange multimedia content stored in their local storage with nearby devices \cite{ji2016fundamental, amer2017delay,amer2018inter,amer2018optimizing}. Since the distance between a requesting device and a device that stores the requested file, called catering device, is small in most cases, \ac{D2D} communication is commonly used for content transmission \cite{ji2016fundamental}. As more than one device might cache the same content, the \ac{SINR} can be improved by joint transmission of the same cached content, which is denoted as cooperative transmission, e.g., via \ac{CoMP} transmission. 

The application of wireless caching along with \ac{CoMP} transmission, where \acp{BS} (or devices) cooperatively serve a content, is widely adopted in literature \cite{7565183,8125681,ao2015distributed,zheng2017optimization,amer2018sky}. For example, a cache-aided transmission scheme for downlink \ac{CoMP} with limited local cache resources at the \ac{BS} is proposed in\cite{8125681} to improve the outage performance. The content is divided into two sets, a popular set and a less popular set. Based on this heterogeneity in file popularity, a cache placement strategy that minimizes the average outage probability is derived. In \cite{ao2015distributed}, the authors propose to combine distributed caching of content in small cells and cooperative transmissions from nearby \acp{BS} to achieve content delivery speeds while reducing backhaul cost and delay. In particular, it is reported that the optimal caching strategy tends to either cache different content to maximize the hit ratio, or cache the same content, albeit in multiple \acp{BS} to achieve multiplexing gains. 
In general, caching at the \acp{BS} or mobile devices with \ac{CoMP} transmission reveals a tradeoff between content diversity gain, when caching and serving diverse content, and cooperative transmission gain, i.e., joint transmissions of the same content from multiple devices/caches \cite{chae2017content,chen2016cooperative,daghal2018content}.		

Compared with the above existing works, the scope of the current paper is to investigate and maximize the cache offloading gain for a clustered \ac{D2D} caching network. The devices are spatially clustered according to a \ac{TCP}, and have unused memory to cache some files following a random probabilistic caching scheme. 
Our network model effectively captures the stochastic nature of channel fading and the clustering nature of devices, which is not addressed yet in the literature, particularly in the context of caching and \ac{CoMP} transmission. \textcolor{black}{We formulate the offloading gain maximization problem, and a lower bound is then obtained, which is tractable. Based on the obtained lower bound, a  closed-form suboptimal caching solution is obtained for the offloading maximization problem. Simulation results show considerable improvement in the offloading gain as compared to other benchmark caching techniques.}

The rest of this paper is organized as follows. Section II and Section III present the system model and the offloading gain characterization, respectively. The rate analysis is introduced in Section IV and the suboptimal caching probability is obtained in Section V. Numerical results are then presented in Section VI, and Section VII concludes the paper.
\section{System Model} 
We consider a wireless caching network, where the mobile devices are randomly deployed and jointly transmit
their cached content to serve a device. The set of locations of the devices $\Phi$ is modelled as a \ac{TCP}, wherein the parent points are drawn from a homogeneous \ac{PPP} $\Phi_p$ with density $\lambda_p$, and the offspring points are \ac{i.i.d.} around each parent point \cite{daley2007introduction}. We will refer to the parent points and offspring as cluster centers and cluster members, respectively. Following \cite[Definition 3.5]{haenggi2012stochastic}, the locations of cluster members around a cluster center at $x\in \Phi_p$, $x \in \mathbb{R}^2$, are sampled from a normal distribution with variance $\sigma^2 \in \mathbb{R}^2$ forming a Gaussian \ac{PPP}, denoted as $\Phi_c$.  
Therefore, the density function of the location of a cluster member relative to its cluster center is
\begin{equation}
f_Y(y) = \frac{1}{2\pi\sigma^2}\textrm{exp}\Big(-\frac{\lVert y\rVert^2}{2\sigma^2}\Big),	
\quad\quad		y \in \mathbb{R}^2
\label{pcp}
\end{equation}
where $\lVert \cdot \rVert$ denotes the Euclidean norm. Finally, if $\overline{n}$ denotes the mean number of members per cluster, the intensity of the process $\Phi$ is $\lambda=\overline{n}\lambda_p$ while the intensity of Gaussian \ac{PPP} $\Phi_c$ is given by 
$\lambda_c(y) = \overline{n}f_Y(y)$.

We consider out-of-band \ac{D2D} communication whereby there is no cross-interference between the cellular network and \ac{D2D} communication. All devices are equipped with a single transmit-receive isotropic antenna and they have no \ac{CSI} from the device they are sending their content to. Furthermore, each D2D transmission uses all the available bandwidth, and the transmitted signals experience single-slope path loss with attenuation exponent $\alpha > 2$ and the power fading, which we model as \ac{i.i.d.} complex Gaussian \ac{RV} with zero mean and unit variance. 
\subsection{Content Popularity and Probabilistic Caching Placement}
We assume that each device has a surplus memory of size $M$ files designated for caching content from a known file library $\mathcal{F}$. The total number of files is $N_f> M$ and the set of content indices is denoted as $\mathcal{F} = \{1, 2, \dots , N_f\}$. These $N_f$ files represent the content catalog that all the devices in a cell may request, which are indexed in a descending order of popularity. The probability that the $m$-th file is requested follows a Zipf distribution given by 
\begin{equation}
q_m = \Bigg(m^{\beta} \sum_{k=1}^{N_f}k^{-\beta}\Bigg)^{-1},
\label{zipf}
\end{equation}
where $q_m$ represents the probability of having the $m$-th file requested, and $\beta$ is a parameter reflecting how skewed the popularity distribution is. Indeed, the lower indexed content has higher popularity, and by definition, $\sum_{m=1}^{N_f}q_m = 1$. It is also assumed that content of interest to the devices might be different across clusters. Therefore, we use Zipf distribution to model the popularity of files per cluster.

We adopt a random content placement where each file $m$ is cached independently at each device according to the probability $c_m$, such that $0 \leq c_m \leq 1$ for all $m=\{1, \dots, N_f\}$. To avoid duplicate caching of the same file within the memory of the same device, we follow the caching approach proposed in \cite{geographic_caching}, which requires that $\sum_{m=1}^{N_f}c_m=M$. Notice that devices caching content $m$ in a given cluster can be modeled as an Gaussian \ac{PPP}  $\Phi_{cm}$ with the intensity function given by the independent thinning theorem as $\lambda_{cm}(y)=c_m \lambda_c(y)$ \cite{haenggi2012stochastic}.
\section{Maximum Offloading Gain}
In this section, we present the proposed probabilistic caching system with cooperative transmission.
We adopt \ac{D2D} communication between cluster members to share files among each other. We also introduce the offloading gain as our key performance metric, which is the probability that a device obtains a desired content either from the local cache or via \ac{D2D} communication, at a rate that is at least $\rho$ \SI{}{bits/sec/Hz}. 

Our cooperative caching scheme works as follows. If a device requires the $m$-th file, first, it searches for the file in its own internal memory. If the requested file is cached in the internal memory, no \ac{D2D} link is scheduled for that file. However, if the requested file is not cached in the device's memory, the file can be downloaded via \ac{CoMP} transmission from all neighboring devices that cache the file in the same cluster, henceforth called catering devices. If the content is not cached entirety in its own cluster, the requesting device requests a file download from the network via the nearest \ac{BS}. 

Given stationarity of the parent process and independence of the offspring process, we can conduct our analysis for the representative cluster, which is an arbitrary cluster whose center is located at $x_0\in \Phi_p$, and the typical device, which is a randomly selected member of the representative cluster and requests the content. Without loss of generality, we assume the typical device is located in the origin $(0, 0) \in \mathbb{R}^2$. When the catering devices jointly transmit the same content $m$, the signal received at the typical device consists of two main components: the desired signal, which is the joint non-coherent transmission from the catering devices in the representative cluster, and the interference, which is created by other active \ac{D2D} links in clusters other than the representative cluster, henceforth called remote clusters. This can be formally denoted as
\begin{align}
y_m =&\underbrace{\sum_{y_0 \in \Phi_{cm}}\sqrt{\gamma_d} G_{y_0} \lVert x_0 + y_0\rVert^{-\alpha/2}  s_{y_0}}_{\text{desired signal}} \nonumber \\
&+  \underbrace{\sum_{x \in \Phi_p^{!}} \sum_{y \in \Phi_{ca}} \sqrt{\gamma_d} G_{y} \lVert x + y\rVert^{-\alpha/2} s_y}_{\text{interference}} + z,
\end{align}				 
where $y_0 \in \Phi_{cm}$ represents the set of catering devices for content $m$,  $G_{y_0}$ denotes the power fading between a catering device at $y_0 \in \Phi_{cm}$ relative to its cluster center at $x_0$ and the typical device, see Fig.~\ref{distance_near}, $\gamma_d$ denotes the \ac{D2D} transmission power, and $s_{y_0}$ is the symbol jointly transmitted by the catering devices $y_0 \in \Phi_{cm}$. 
$\Phi_p^{!}= \Phi_p \setminus\{x_0\}$ denotes the set of remote clusters, $s_y$ is the symbol transmitted by device in $y$, and $\Phi_{ca} \subseteq \Phi_{c}$ represents the set of transmitting devices in a remote cluster centered around $x\in\Phi_p^{!}$. 
$z$ denotes the standard additive white Gaussian noise and $G_{y}$ denotes the power fading between a potential interfering device at $y$ relative to its cluster center at $x$ and the typical device, see Fig.~\ref{distance_near}.

We focus on the interference-limited scenario with the effect of the thermal noise disregarded. Assuming unit power Gaussian symbols and treating interference as noise, the received \ac{SIR} at the typical receiver when downloading content $m$ is given by
\begin{align}
\sir_{m} = \frac{\gamma_d \Bigg|\sum_{y_0 \in \Phi_{cm}}G_{y_0} \lVert x_0 + y_0\rVert^{-\alpha/2}\Bigg|^2}{I_m}
\end{align}
where $I_m$ is the sum of interfering signal power associated with the downloading of content $m$, given by
\begin{align}
I_m = \sum_{x \in \Phi_p^{!}} \sum_{y \in \Phi_{ca}} \gamma_d G_{y} \lVert x + y\rVert^{-\alpha}
\end{align}
Accordingly, the instantaneous achievable data rate when downloading content $m$ cooperatively from all the catering devices can be defined as
\begin{align}
R_{m} = \log_2\big[1+ \sir_{m}\big],
\end{align}
in \SI{}{bits/sec/Hz}. \textcolor{black}{It is assumed that \ac{CoMP} transmission is adopted among all clusters, i.e., $\Phi_{ca}$ represents the set of active transmitters in the cluster centered at $x$. To simplify the mathematical analysis, we consider the worst case interference scenario when all the remote clusters' devices transmit a certain file, i.e., when $\Phi_{ca} \to \Phi_{c}$}. For notational simplicity, we henceforth drop the superscript $m$ from the interference. 

\begin{figure} [!t] 
\centering
\includegraphics[width=0.28\textwidth]{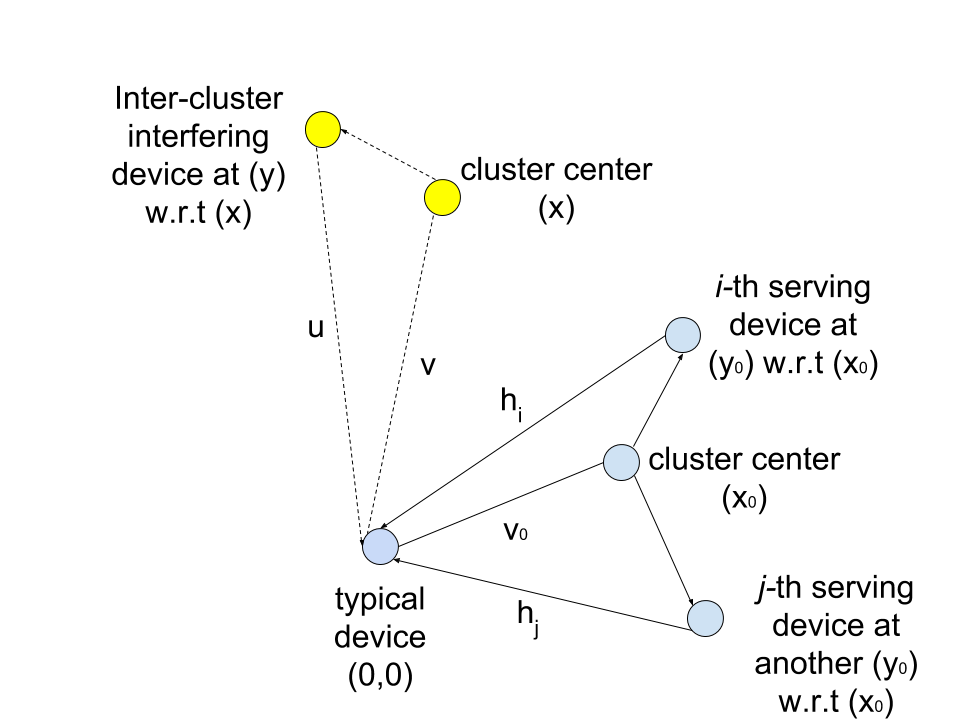}		
\caption {Illustration of the representative cluster and one interfering cluster.}
\label{distance_near}
\end{figure}

Now, we are ready to introduce an expression for the offloading gain at the typical receiver as 
\begin{align}
\label{eq:offload}
\mathbb{P}_{o}(\boldsymbol{c}) =  \sum_{m=1}^{N_f} q_m\Big(c_m &+ (1-c_m)\mathbb{P}(R_{m}>\rho) \Big)
 \end{align}
 
\section{Rate Analysis}
\textcolor{black}{In this section, we conduct the achievable rate analysis to obtain the probability $\mathbb{P}(R_{m}>\rho$).}
The probability that the achievable rate when downloading the content $m$ via \ac{CoMP} transmission is larger than $\rho$ is equal to
 \begin{align}
\Pb(R_{m}>\rho) &= \mathbb{P}\big(\log_2\big[1+ \sir_{m}\big] >\rho \big)			
\nonumber \\
&= \Pb\big(\sir_{m} >2^\rho-1 \big)			
\nonumber \\
&= \Pb\Bigg[ \frac{ \gamma_d \Big|\sum_{y_0 \in \Phi_{cm}}G_{y_0} \lVert x_0 + y_0\rVert^{-\alpha/2}\Big|^2}{I} \geq \theta\Bigg],
 \end{align}
where $\theta=2^\rho-1$. 
Since $G_{y_0}$ are \ac{i.i.d.} and $\sim \mathcal{CN}(0,1)$, we have $\Big|\sum_{y_0 \in \Phi_{cm}}\lVert x_0 + y_0\rVert^{-\alpha/2} G_{y_0}\Big|^2 \sim \exp\Big(\frac{1}{\sum_{y_0 \in \Phi_{cm}}\lVert x_0 + y_0\rVert^{-\alpha}}\Big)$. Then we have     
\begin{align}
\Pb(R_{m}>\rho) &\overset{}{=}  \Eb\Big[\exp\Big(-\frac{\theta I}{\gamma_d S_{\Phi_{cm}}}\Big)\Big] \nonumber \\
&\overset{(a)}{=}  \Eb\left[ \Lc_{I} \Big(t=\frac{\theta}{\gamma_d s_m}\Big)\Bigg|S_{\Phi_{cm}}=s_m\right]
\label{eq:rp-exact}
\end{align} 
where $S_{\Phi_{cm}}=\sum_{y_0 \in \Phi_{cm}}\lVert x_0 + y_0\rVert^{-\alpha}$ is a random variable that can be physically interpreted as the signal power from the catering devices $y_0 \in \Phi_{cm}$ subject to path-attenuation only (as we already averaged over the fading), assuming normalized power, and (a) is the Laplace transform of the inter-cluster interference evaluated at $t=\frac{\theta}{\gamma_d s_m}$.

In the ensuing subsections, we first derive the Laplace transform of the inter-cluster interference to obtain $\Pb(R_{m}>\rho)$. Then, we propose an upper bound on the interference to simplify the calculation of $\Pb(R_{m}>\rho)$, and correspondingly, the offloading gain.
The Laplace transform of the inter-cluster interference is presented in the following Lemma.
\begin{lemma}
\label{ch4:comp-interference}				
The Laplace transform of the inter-cluster interference, conditioned on the realization of the transmitters in the representative cluster $\Phi_{cm}$  serving content $m$ is expressed as
\begin{align}
\label{eq:lap-transform}
\Lc_{I}(t) = {\rm exp}\Big(-2\pi \lambda_p\int_{v=0}^{\infty}\Big(1 - e^{-\overline{n}\zeta(v,t)}\Big)v\dd{v}\Big)
\end{align}
where $t=\frac{\theta}{\gamma_d s_m}$, $\zeta(v,t) = \int_{u=0}^{\infty}\frac{t\gamma_d}{u^{\alpha}+t\gamma_d} f_{U|V}(u|v)\dd{u}$, $f_{U|V}(u|v)=\mathrm{Rice} (u;v,\sigma)$ is the Rician \ac{PDF} modeling the distance $U=\lVert x+y\rVert$ between an interfering device at $y$ relative to its cluster center at $x \in \Phi_{p}$ and the origin $(0,0)$ conditioned on $V=\lVert x\rVert=v$.
\end{lemma}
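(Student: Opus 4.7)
The plan is to compute $\mathcal{L}_I(t)=\mathbb{E}[e^{-tI}]$ by the standard two-layer PGFL recipe for a Thomas cluster process: first average out the i.i.d.\ fading coefficients $G_y$, then apply the PGFL of the offspring Gaussian \ac{PPP} around each remote parent, and finally apply the PGFL of the parent \ac{PPP} $\Phi_p^{!}$. Because we condition on the representative cluster's transmitter realization $\Phi_{cm}$ (it contributes only to the desired signal), the interference depends only on clusters at $x\in\Phi_p^{!}$, and by Slivnyak's theorem $\Phi_p^{!}$ has the same distribution as $\Phi_p$.

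First I would write
\begin{align*}
\mathcal{L}_I(t) &= \mathbb{E}\Bigl[\prod_{x\in\Phi_p^{!}}\prod_{y\in\Phi_{c}}\mathbb{E}_{G_y}\bigl[\exp(-t\gamma_d G_y\lVert x+y\rVert^{-\alpha})\bigr]\Bigr],
\end{align*}
and use the fact that $G_y$ has a unit-mean exponential distribution (since $G\sim\mathcal{CN}(0,1)$ implies $|G|^2$ is $\mathrm{Exp}(1)$, and here the powers enter linearly through $|G_y|^2$ after standard substitution) to obtain
\[
\mathbb{E}_{G_y}\bigl[\exp(-t\gamma_d G_y\lVert x+y\rVert^{-\alpha})\bigr]=\frac{\lVert x+y\rVert^{\alpha}}{\lVert x+y\rVert^{\alpha}+t\gamma_d}.
\]
Next I would apply the PGFL of the offspring Gaussian \ac{PPP}, which has intensity $\overline{n}f_Y(y)$, to the inner product over $y\in\Phi_c$:
\[
\mathbb{E}\Bigl[\prod_{y\in\Phi_c}\tfrac{\lVert x+y\rVert^{\alpha}}{\lVert x+y\rVert^{\alpha}+t\gamma_d}\Bigr]
=\exp\!\Bigl(-\overline{n}\int_{\mathbb{R}^2}\tfrac{t\gamma_d}{\lVert x+y\rVert^{\alpha}+t\gamma_d}f_Y(y)\dd{y}\Bigr).
\]

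The subtlety is to show that this inner integral depends on $x$ only through $v=\lVert x\rVert$ and reduces to $\zeta(v,t)$. For this I would change variable to $u=\lVert x+y\rVert$; since $y$ is zero-mean isotropic Gaussian in $\mathbb{R}^2$ with variance $\sigma^2$, the random variable $\lVert x+y\rVert$ conditioned on $\lVert x\rVert=v$ follows the Rice distribution $f_{U\mid V}(u\mid v)=\mathrm{Rice}(u;v,\sigma)$. This converts the 2D integral into the 1D integral defining $\zeta(v,t)$, so the Laplace-transform contribution of a remote parent at distance $v$ from the origin is $\exp(-\overline{n}\zeta(v,t))$.

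Finally I would apply the PGFL of the parent \ac{PPP} $\Phi_p^{!}$ (with intensity $\lambda_p$, invoking Slivnyak), obtaining
\[
\mathcal{L}_I(t)=\exp\!\Bigl(-\lambda_p\!\int_{\mathbb{R}^2}\!\bigl(1-e^{-\overline{n}\zeta(\lVert x\rVert,t)}\bigr)\dd{x}\Bigr),
\]
and pass to polar coordinates with $v=\lVert x\rVert$ to recover \eqref{eq:lap-transform}. The main obstacle I anticipate is the middle step: correctly identifying that after the PGFL over offspring the integrand depends on $x$ solely through $v=\lVert x\rVert$, and recognizing that the appropriate conditional law of $\lVert x+y\rVert$ is Rician with non-centrality $v$ and scale $\sigma$, which is what makes the outer parent integral one-dimensional and the final expression tractable.
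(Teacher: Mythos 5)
Your proposal is correct and follows essentially the same route as the paper: average the exponential fading to get $\frac{1}{1+t\gamma_d u^{-\alpha}}$, reduce the per-cluster contribution to $e^{-\overline{n}\zeta(v,t)}$ using the Rician law of $\lVert x+y\rVert$ given $\lVert x\rVert=v$, and finish with the parent-PPP PGFL in polar coordinates. The only cosmetic difference is that you invoke the offspring PGFL directly, whereas the paper conditions on the Poisson number of offspring and recombines the resulting sum via the Taylor series of the exponential --- these are the same computation.
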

\begin{IEEEproof}
Please refer to \App{proof-comp-interference}. 
\end{IEEEproof}
We continue on by characterizing the serving distances' distribution. For a given realization $S_{\Phi_{cm}}=s_m$, let us assume that there are $k$ catering devices in the representative cluster. 
Let us also denote joint distances from the typical device (origin) to the $k$ serving devices in the representative cluster $x_0$ as $\boldsymbol{H}_k= \{H_1, \dots, H_k\}$. Then, conditioning on $\boldsymbol{H}_k = \boldsymbol{h}_k$, where $\boldsymbol{h}_k= \{h_1, \dots, h_k\}$, the conditional \ac{PDF} of the joint serving distances' distribution is denoted as $f_{\boldsymbol{H}_k}(\boldsymbol{h}_k)$.
%

 Since a serving device $i$ in the representative cluster $x_0$ has its coordinates in $\mathbb{R}^2$ chosen independently from a Gaussian distribution with standard deviation $\sigma$, then, by definition, the distance from such a serving device to the origin, denoted as $h_i=\lVert x_0+y_0\rVert$, $y_0\in \Phi_{cm}$, has Rician distribution $f_{H_i|V_0}(h_i|v_0)=\mathrm{Rice}(h_i;v_0,\sigma)$ conditioned on $V_0=\lVert x_0\rVert=v_0$. 
Since also the serving devices and the typical device have their locations sampled from a normal distribution with variance $\sigma^2$ relative to  their cluster center $x_0$, then, by definition, the statistical distance distribution between any two points, e.g., from the $i$-th serving device to the typical device, follows Rayleigh distribution $f_{H_i}(h_i)=\mathrm{Rayleigh}(h_i,\sqrt{2}\sigma)$. 

If the serving distances from the typical device to the different points of the cluster were independent from each other, $f_{\boldsymbol{H}_k}(\boldsymbol{h}_k)$ would simply be the product of $k$ independent \acp{PDF}, each of them given by $f_{H_i}(h_i)=\mathrm{Rayleigh}(h_i,\sqrt{2}\sigma)$. 
However, there is a correlation between the serving distances due to the common factor $x_0$ in the serving distance equation $h_i=\lVert x_0 + y_0\rVert$ with $y_0\in \Phi_{cm}$, see also Fig.~\ref{distance_near}. To further simplify the analysis, we neglect this correlation and assume that the serving distances are \ac{i.i.d.} Rayleigh distributed with marginal distributions $f_{H_i}(h_i)=\mathrm{Rayleigh}(h_i,\sqrt{2}\sigma)$. Hence, the conditional \ac{PDF} of the joint serving distances' distribution $f_{\boldsymbol{H}_k}(\boldsymbol{h}_k)$ is directly obtained from
\begin{align}
\label{joint-pdf} 
f_{\boldsymbol{H}_k}(\boldsymbol{h}_k) = \prod_{i=1}^{k}\frac{h_i}{2\sigma^2}e^{-\frac{h_i^2}{4\sigma^2}}
\end{align}
Conditioning on having $k$ catering devices, i.e., $s_{\Phi_{cm}}=\sum_{i=1}^{k}  h_i^{-\alpha}$, now, the probability $\mathbb{P}[R_{m}>\rho]$ in (\ref{eq:offload}) can be written as
\begin{align}
\label{given-k} 
\mathbb{P}[R_{m}>\rho|k] = \int_{\boldsymbol{h}_k=\boldsymbol{0}}^{\infty} \mathscr{L}_{I} \Bigg(\frac{\theta}{\gamma_d\sum_{i=1}^{k}  h_i^{-\alpha}}
\Bigg|k\Bigg) f_{\boldsymbol{H}_k}(\boldsymbol{h}_k) \dd{\boldsymbol{h}}_k
\end{align}
Given that $\Phi_{cm}$ is a \ac{PPP} , the number of catering devices for content $m$ is a Poisson \ac{RV} with mean $c_m\overline{n}$. Therefore, the probability that there are $k$ catering devices is equal to $\frac{(\overline{n}c_m)^ke^{-c_m\overline{n}}}{k!}$. Invoking this along with (\ref{eq:lap-transform}), (\ref{joint-pdf}), and (\ref{given-k}) into (\ref{eq:offload}), $\mathbb{P}_{o}(\boldsymbol{c})$ is given at the top of next page in (\ref{ch4:offloading-gain-eqn}). 

Since the obtained expression for $\mathbb{P}_{o}(\boldsymbol{c})$ in (\ref{ch4:offloading-gain-eqn}) involves multi-fold numerical integrals and summations, this renders our offloading maximization problem intractable. To circumvent this difficulty and gain a more general understanding of the problem at hand, we derive a lower bound and an approximation on $\mathbb{P}_{o}(\boldsymbol{c})$ in the sequel.
\newcounter{MYtempeqncnt}
\begin{figure*}[!t]
\normalsize
\setcounter{MYtempeqncnt}{\value{equation}}
\begin{align}
\label{ch4:offloading-gain-eqn} 
\mathbb{P}_{o}(\boldsymbol{c}) =  \sum_{m=1}^{N_f} q_m\Bigg(c_m + \big(1-c_m\big).
\underbrace{\sum_{k=1}^{\infty} \frac{(\overline{n}c_m)^ke^{-c_m\overline{n}}}{k!}
 \int_{\boldsymbol{h}_k=\boldsymbol{0}}^{\infty} {\rm exp}\Big(-2\pi \lambda_p\int_{v=0}^{\infty}\Big(1 - e^{-\overline{n}(1 - \zeta(v,t))}\Big)v\dd{v}\Big) \prod_{i=1}^{k}\frac{h_i}{2\sigma^2}e^{-\frac{h_i^2}{4\sigma^2}} \dd{\boldsymbol{h}}_k}_{\mathbb{P}(R_{m}>\rho)} \Bigg)
\end{align}
\setcounter{MYtempeqncnt}{\value{equation}}
\begin{align}
\label{ch4:lower-bound-offload-gain} 
\mathbb{P}_{o}^{\sim}(\boldsymbol{c}) &= \sum_{m=1}^{N_f} q_m\Big(c_m + \big(1-c_m\big).
\underbrace{\sum_{k=1}^{\infty} \frac{(\overline{n}c_m)^ke^{-c_m\overline{n}}}{k!}
\int_{\boldsymbol{h}_k=\boldsymbol{0}}^{\infty} e^{-\pi \overline{n}\lambda_p (\frac{\theta}{\sum_{i=1}^{k} h_i^{-\alpha}})^{2/\alpha} \Gamma(1 + 2/\alpha)\Gamma(1 - 2/\alpha)}
 \prod_{i=1}^{k}\frac{h_i}{2\sigma^2}e^{-\frac{h_i^2}{4\sigma^2}} \dd{\boldsymbol{h}}_k}_{\mathbb{P}(R_{m}>\rho)}\Big)	\end{align}
\setcounter{equation}{\value{MYtempeqncnt}+1}
\hrulefill
\end{figure*}
\begin{proposition}
\label{ch4:comp-interference-approx}	
The Laplace transform of interference derived in \Eq{lap-transform} can be bounded by
 \begin{equation} 
 \label{eq:ppp-interference}
\Lc_{I}(t) \approx \exp\left(-\pi \overline{n}\lambda_p t^{2/\alpha} \Gamma(1 + 2/\alpha)\Gamma(1 - 2/\alpha)\right).
\end{equation}
\end{proposition}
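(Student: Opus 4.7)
The plan is to start from the exact Laplace transform in Lemma~\ref{ch4:comp-interference} and apply the elementary inequality $1-e^{-x}\le x$ to the outer integrand. Writing $\phi(v,t)=\overline{n}\zeta(v,t)$, this linearisation gives
\begin{equation*}
\Lc_{I}(t)=\exp\!\Bigl(-2\pi\lambda_p\!\int_0^\infty\!(1-e^{-\phi(v,t)})v\,dv\Bigr)\ge \exp\!\Bigl(-2\pi\overline{n}\lambda_p\!\int_0^\infty\!\zeta(v,t)\,v\,dv\Bigr),
\end{equation*}
which is expected to be tight in the regime where $\overline{n}\zeta(v,t)$ is small (moderate mean cluster size or cluster spread $\sigma$ not too small). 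This is the step that turns the inequality into the approximation stated in the proposition.

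Next I would substitute the definition of $\zeta(v,t)$ and exchange the order of integration using Tonelli (all integrands nonnegative):
\begin{equation*}
\int_0^\infty v\,\zeta(v,t)\,dv=\int_0^\infty\frac{t\gamma_d}{u^\alpha+t\gamma_d}\Bigl(\int_0^\infty v\,f_{U|V}(u|v)\,dv\Bigr)du.
\end{equation*}
The crucial identity is $\int_0^\infty v\,f_{U|V}(u|v)\,dv=u$. I would prove this by observing that the first-moment measure of the Thomas cluster process on $\Rs$ is the convolution of the homogeneous parent intensity $\lambda_p$ with the Gaussian offspring kernel $\overline{n}f_Y$, which equals the constant $\overline{n}\lambda_p$. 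Converting this intensity to polar distance from the origin two different ways (directly, which gives $2\pi u\,\overline{n}\lambda_p$, and by conditioning on the parent distance $V=v$, which gives $\int_0^\infty 2\pi v\lambda_p\cdot\overline{n}\,f_{U|V}(u|v)\,dv$) and equating yields the identity. In effect this step replaces the interfering Thomas cluster process by an equivalent homogeneous PPP of density $\overline{n}\lambda_p$, and it is also the conceptual heart of the argument and the likely main obstacle, since a reader may want to see the de-clustering interpretation made explicit.

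Plugging the identity back in reduces the computation to the classical PPP/Rayleigh-fading integral
\begin{equation*}
\int_0^\infty\frac{t\gamma_d}{u^\alpha+t\gamma_d}\,u\,du=(t\gamma_d)^{2/\alpha}\int_0^\infty\frac{w}{w^\alpha+1}\,dw=\tfrac{1}{2}(t\gamma_d)^{2/\alpha}\Gamma(1+2/\alpha)\Gamma(1-2/\alpha),
\end{equation*}
where the substitution $w=u(t\gamma_d)^{-1/\alpha}$ and the Beta-function evaluation $\int_0^\infty w/(w^\alpha+1)\,dw=(1/\alpha)\Gamma(2/\alpha)\Gamma(1-2/\alpha)$ combined with $\Gamma(1+2/\alpha)=(2/\alpha)\Gamma(2/\alpha)$ are used. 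Substituting this into the earlier bound produces $\Lc_I(t)\gtrapprox\exp\!\bigl(-\pi\overline{n}\lambda_p(t\gamma_d)^{2/\alpha}\Gamma(1+2/\alpha)\Gamma(1-2/\alpha)\bigr)$, which matches the claimed expression (with $\gamma_d$ absorbed into $t$ as in the notation used when the proposition is applied). I would flag that the only approximation is the initial linearisation of $1-e^{-x}$; everything afterwards is an equality, so the tightness of the proposition is controlled entirely by how close $\overline{n}\zeta(v,t)$ stays to zero.
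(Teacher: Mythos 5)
Your argument is correct and rests on the same key idea as the paper's proof: linearizing the exponential inside the \ac{PGFL} so that the clustered interference field collapses to an equivalent homogeneous \ac{PPP} of density $\overline{n}\lambda_p$. The difference is in execution. The paper applies the Taylor expansion $e^{-x}\approx 1-x$ to a version of the integrand in which the fading has \emph{not} yet been averaged ($\zeta'(v,t)$ depends on $G_u$), then moves $\Eb_{G_u}$ inside the exponent and outsources the value of the remaining integral to a lemma cited from prior work. You instead apply the one-sided inequality $1-e^{-x}\le x$ directly to the fading-averaged $\zeta(v,t)$ of Lemma~\ref{ch4:comp-interference}, which (i) turns the lower-bound direction asserted in the paper's Remark into an actual inequality rather than a first-order approximation, and (ii) sidesteps the paper's delicate interchange of $\Eb_{G_u}$ with the exponential (the two routes agree because $\Eb_{G_u}[\zeta'(v,t)]=1-\zeta(v,t)$). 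Your explicit proof of the de-clustering identity $\int_0^\infty v\,f_{U|V}(u|v)\,\dd{v}=u$ --- which also follows directly from the symmetry $v\,\mathrm{Rice}(u;v,\sigma)=u\,\mathrm{Rice}(v;u,\sigma)$ of the Rician kernel --- together with the Beta-function evaluation is precisely the content of the external lemma the paper cites, so your write-up is self-contained where the paper's is not. Your closing remark on $\gamma_d$ is also well taken: the proposition's display drops $\gamma_d$ while the appendix retains $(t\gamma_d)^{2/\alpha}$, and the quantity actually used in (\ref{ch4:lower-bound-offload-gain}) is $t\gamma_d=\theta/s_m$, so absorbing $\gamma_d$ into $t$ as you do is the consistent reading.
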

\begin{IEEEproof}
Please refer to \App{proof-comp-interference-approx}. 
\end{IEEEproof}

\begin{remark} {\rm 
The expression for the bound on the conditional Laplace transform of the interference in (\ref{eq:ppp-interference}) is identical to the (un-conditional) Laplace transform of a \ac{PPP} with density $\overline{n}\lambda_p$, which means that the obtained approximation is, indeed, a lower bound on the Laplace transform, accuracy of which we show in \Fig{lower-bound}.}
\end{remark}

Plugging this result into (\ref{eq:offload}), a lower bound on the offloading gain, denoted as $\mathbb{P}_{o}^{\sim}(\boldsymbol{c})$, is given in (\ref{ch4:lower-bound-offload-gain}) at the top of next page.  
\section{Optimized Caching Probabilities}
Although $\mathbb{P}_{o}^{\sim}(\boldsymbol{c})$ is much simpler to compute as compared to $\mathbb{P}_{o}(\boldsymbol{c})$, it is still challenging to obtain the optimal caching probability due to the multi-fold integration in (\ref{ch4:lower-bound-offload-gain}). In what follows, we consider a special case for $\mathbb{P}_{o}^{\sim}(\boldsymbol{c})$ wherein the offloading gain maximization problem turns out to be convex. 
\subsubsection{One Serving Device (k=1)}		
In this case, we solve for the caching probability for $k=1$ and $t = \frac{\theta h^{\alpha}}{\gamma_d}$. Starting from (\ref{ch4:lower-bound-offload-gain}) at the top of next page with $k=1$, we have 
\begin{align}
\mathbb{P}(R_{m}>\rho) &= (\overline{n}c_m)e^{-c_m\overline{n}}\cdot
\nonumber \\
&\int_{h=0}^{\infty} e^{-\pi \overline{n}\lambda_p (\theta h^{\alpha})^{2/\alpha} \Gamma(1 + 2/\alpha)\Gamma(1 - 2/\alpha)}
 \frac{h}{2\sigma^2}e^{-\frac{h^2}{4\sigma^2}} \dd{h}
\end{align} 
Solving the integral in the above, and substituting in (\ref{ch4:lower-bound-offload-gain}), we get
$\mathbb{P}_{o}^{\sim1}(\boldsymbol{c})$ written as 
\begin{equation}
\label{ch4:lower-bound-closed-form}
\mathbb{P}_{o}^{\sim1}(\boldsymbol{c})\overset{}{=}\sum_{m=1}^{N_f} q_m\Big(c_m + \big(1-c_m\big) (c_m\overline{n})e^{-c_m\overline{n}} 
   \frac{1}{\mathcal{Z}(\theta,\alpha,\sigma) }\Big)
\end{equation}   
where $\mathcal{Z}(\theta,\alpha,\sigma) = 4\sigma^2\pi\overline{n}\lambda_p \theta^{2/\alpha}\Gamma(1 + 2/\alpha)\Gamma(1 - 2/\alpha)+ 1$.

Hence, the desirable caching placement can be found by solving the following problem
\begin{align}
\label{optimize_eqn_p}
\textbf{P1:}		\quad &\underset{\boldsymbol{c}}{\text{max}} \quad \mathbb{P}_{o}^{\sim1}(\boldsymbol{c}) \\
\label{const110}
&\textrm{s.t.}\quad  \sum_{n=1}^{N_f} c_m = M, \quad   c_m \in [ 0, 1] 
\end{align}

The optimal solution for \textbf{P1} is formulated in the following Lemma.	
\begin{lemma}
\label{ch4:offload-concave}
The lower bound on the offloading gain $\mathbb{P}_{o}^{\sim1}(\boldsymbol{c})$ in (\ref{ch4:lower-bound-closed-form}) is concave w.r.t.  the caching probability, and the optimal probabilistic caching $\underline{\boldsymbol{c}}^*$ for \textbf{P1} is given by
      \[
    \underline{c_{m}}^{*}=\left\{
                \begin{array}{ll}
                  1 \quad\quad\quad , v^{*} <   q_m - \frac{q_m \overline{n}e^{-\overline{n}}}{\mathcal{Z}}\\ 
                  0   \quad\quad\quad, v^{*} >   q_m + \frac{q_m \overline{n}}{\mathcal{Z}}\\
                 \psi(v^{*}) \quad, {\rm otherwise}
                \end{array}   
              \right.
  \]
where $\psi(v^{*})$ is the solution of $v^{*} =   q_m + \frac{q_m \overline{n}e^{-\underline{c_{m}}^{*} \overline{n}}}{\mathcal{Z}}
\big(1 - \underline{c_{m}}^{*}(2 +\overline{n} - \overline{n}\underline{c_{m}}^{*}) \big)$ that satisfies $\sum_{m=1}^{N_f} \underline{c_{m}}^{*}=M$.
\end{lemma}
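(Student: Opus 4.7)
The first observation is that $\mathbb{P}_{o}^{\sim1}(\boldsymbol{c})$ is separable: it can be written as $\sum_{m=1}^{N_f} q_m f_m(c_m)$ with $f_m(c) = c + (1-c)\,c\overline{n} e^{-c\overline{n}}/\mathcal{Z}$, where $\mathcal{Z}(\theta,\alpha,\sigma)$ does not depend on the caching vector. Hence the Hessian is diagonal and concavity of $\mathbb{P}_{o}^{\sim1}$ reduces to showing $f_m''(c)\le 0$ on $[0,1]$ for every $m$. The plan is to compute $f_m'(c)=1+\frac{\overline{n}}{\mathcal{Z}}e^{-c\overline{n}}\bigl[1-c(2+\overline{n}-\overline{n}c)\bigr]$ and then differentiate once more, grouping the resulting polynomial so the sign can be read off. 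Since the feasible region $\{c_m\in[0,1],\ \sum_m c_m=M\}$ is convex, concavity of the objective makes the maximization in \textbf{P1} a convex program with a unique optimal value, which justifies invoking KKT conditions.

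Next I would form the Lagrangian with a single equality multiplier $v$ for $\sum_m c_m=M$ and pairs of non-negative multipliers $\mu_m,\eta_m$ for the box constraints $c_m\ge 0$ and $c_m\le 1$, giving the stationarity relation $q_m f_m'(c_m)-v+\mu_m-\eta_m=0$. Primal feasibility and complementary slackness then split the analysis into three cases. If $c_m^*=0$ (so $\eta_m=0,\mu_m\ge 0$), the stationarity condition forces $v^*\ge q_m f_m'(0)=q_m(1+\overline{n}/\mathcal{Z})$, which rearranges exactly to the second branch of the lemma. If $c_m^*=1$ (so $\mu_m=0,\eta_m\ge 0$), it gives $v^*\le q_m f_m'(1)=q_m(1-\overline{n}e^{-\overline{n}}/\mathcal{Z})$, matching the first branch. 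Otherwise $0<c_m^*<1$ and $\mu_m=\eta_m=0$, so $v^*=q_m f_m'(c_m^*)$, which is the implicit equation $\psi(v^*)$ stated in the lemma.

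The dual variable $v^*$ is then pinned down by the primal feasibility requirement $\sum_{m=1}^{N_f}\underline{c_m}^*=M$, which is a single scalar equation that can be solved numerically by bisection on $v^*$, since $f_m'$ is monotone in each active branch (another consequence of concavity). The step I expect to be the main obstacle is establishing $f_m''(c)\le 0$ cleanly on $[0,1]$: the second derivative is $\frac{\overline{n}}{\mathcal{Z}}e^{-c\overline{n}}\bigl[-2(\overline{n}+1)+\overline{n}c(4+\overline{n})-\overline{n}^2 c^2\bigr]$, and its sign on the unit interval must be argued carefully, possibly by checking the quadratic in $c$ at the boundary points and verifying it stays non-positive throughout. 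Once concavity is secured the rest is a routine KKT case split, and the characterization of $\underline{c_m}^*$ follows directly.
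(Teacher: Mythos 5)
Your reduction to the separable form $\sum_m q_m f(c_m)$ with $f(c)=c+(1-c)c\overline{n}e^{-c\overline{n}}/\mathcal{Z}$, your expressions for $f'$ and $f''$, and the KKT case split are all computed correctly: $q_mf'(0)=q_m+q_m\overline{n}/\mathcal{Z}$ and $q_mf'(1)=q_m-q_m\overline{n}e^{-\overline{n}}/\mathcal{Z}$ do reproduce the two thresholds in the lemma, and the interior stationarity condition matches the implicit equation for $\psi(v^*)$. (The paper itself omits its proof entirely, so there is nothing to compare against beyond the statement; your route is the natural one the authors presumably intend, given their reliance on KKT elsewhere.) The problem is the step you deferred. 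The bracket in your second derivative, $Q(c)=-\overline{n}^2c^2+\overline{n}(4+\overline{n})c-2(\overline{n}+1)$, does \emph{not} stay non-positive on $[0,1]$: evaluating at the boundary point you proposed to check gives $Q(1)=2(\overline{n}-1)>0$ for any $\overline{n}>1$, and at the vertex $c=(4+\overline{n})/(2\overline{n})$ (which lies in $[0,1]$ once $\overline{n}\ge 4$) one gets $Q=2+\overline{n}^2/4>0$. With the paper's own parameter $\overline{n}=8$, $f$ is concave only for roughly $c\lesssim 0.22$ and strictly convex on the rest of the unit interval. So the concavity claim you need as the foundation of the argument is false in the stated generality, and no amount of careful sign analysis will rescue it.

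This is not a cosmetic issue: without concavity, stationarity of the Lagrangian is only a necessary condition, so a KKT point need not be the global maximizer of \textbf{P1}; moreover $f'$ is no longer monotone on $(0,1)$ (it decreases to a minimum near the smaller root of $Q$ and then increases back toward $1$), so the equation $v^*=q_mf'(c)$ can have two roots in $(0,1)$ and $\psi(v^*)$ is not well defined, and your bisection argument for pinning down $v^*$ loses its justification. To make the argument sound you would have to either restrict to a regime where concavity actually holds (e.g., small $\overline{n}$ together with a cap on $c$, or $c_m$ confined to $[0,c_-]$ with $c_-=\frac{4+\overline{n}-\sqrt{\overline{n}^2+8}}{2\overline{n}}$), or abandon the convex-program framing and present the threshold structure as a necessary (KKT) characterization of candidate solutions rather than as the provably optimal caching vector.
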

\begin{proof}
The details are omitted due to the limited space.
\end{proof}
It is worth mentioning that the optimal caching solution $\underline{\boldsymbol{c}}^*$ for \textbf{P1} is strictly suboptimal relative to the caching solution for a problem with $k$ cooperative caching devices, where $k$ is random. 
However, when substituted to (\ref{ch4:offloading-gain-eqn}), it provides insights into system design and allows us to quantify the performance improvements over traditional caching techniques.
\section{Numerical Results}
\begin{table}[ht]
\caption{Simulation Parameters} 
\centering 
\begin{tabular}{c c  c} 
\hline\hline 
Description & Parameter & Value  \\ [0.5ex] 
\hline 
Displacement standard deviation & $\sigma$ & \SI{50}{\metre} \\ 
Popularity index&$\beta$&0.5\\			
Path loss exponent&$\alpha$&4\\
Library size&$N_f$&100 files\\
Cache size per device&$M$&5 files\\
Mean number of devices per cluster&$\overline{n}$&8\\
Density of clusters&$\lambda_{p}$&40 clusters/\SI{}{km}$^2$ \\
$\sir$ threshold&$\theta$&\SI{0}{\deci\bel}\\
\hline 
\end{tabular}
\label{ch4:table:sim-parameter} 
\end{table}
At first, we validate the developed mathematical model via Monte Carlo simulations. Then we benchmark the proposed \ac{PC} against conventional caching schemes. Unless otherwise stated, the network parameters are selected as shown in Table \ref{ch4:table:sim-parameter}. 
\begin{figure}[tb]	
\centering
\includegraphics[width=0.35\textwidth]{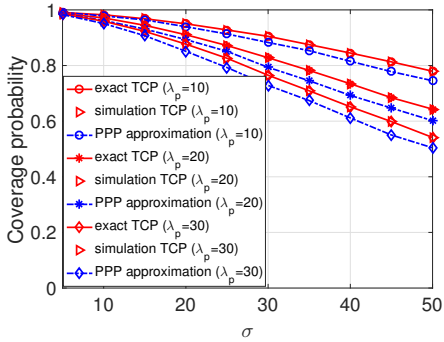}		
\caption {\textcolor{black}{The derived lower bound on $\mathbb{P}(R_{m}>\rho)$ in (\ref{ch4:lower-bound-offload-gain}) versus intra-cluster displacement variance $\sigma$ is plotted for various parent \ac{PPP} density $\lambda_p$, $c_m=1$. "exact TCP" in the legend refers to the exact expression obtained for the \ac{TCP} while "PPP approximation" refers to the lower bound of (\ref{eq:ppp-interference}).}}
\label{fig:lower-bound}
\vspace{-0.4cm}
\end{figure}  
We hereafter call the coverage probability of content $m$ and $\mathbb{P}(R_{m}>\rho)$ interchangeably. 
In Fig.~\ref{fig:lower-bound}, we plot the closed-form expression, simulation, and lower bound on the coverage probability of content $m$ versus displacement variance $\sigma$ for various parent \ac{PPP} densities $\lambda_p$. The theoretical and simulated results for the coverage probability are consistent. The derived lower bound is considerably tight when both $\sigma$ and $\lambda_p$ are relatively small. Also, it is noticeable that the coverage probability $\mathbb{P}(R_{m}>\rho)$ monotonically decreases with both $\sigma$ and $\lambda_p$, which reflects the fact that the desired signal is weaker when the distance between catering devices and the receiver is larger, and the effect of inter-cluster interference increases when the density of clusters increases, respectively. When $\lambda_p$ and $\sigma$ increase, the obtained lower bound becomes no longer tight, however, it still represents a decent tractable bound on the exact coverage probability.

\begin{figure}[tb]	
\centering
\includegraphics[width=0.35\textwidth]{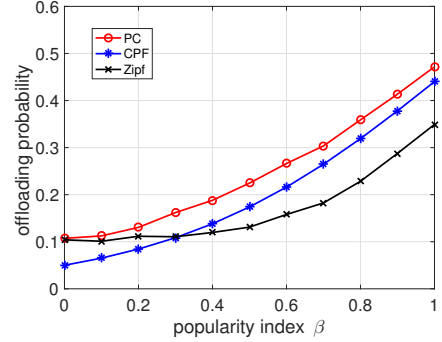}		
\caption {The offloading probability versus the popularity of files $\beta$ under different caching schemes, namely, PC, Zipf, and CPF.}
\label{offloading_gain_vs_beta}
\vspace{-0.4cm}
\end{figure}
Fig.~\ref{offloading_gain_vs_beta} manifests the prominent effect of the files' popularity on the offloading gain. We compare the offloading gain of three different caching schemes, namely, the proposed \ac{PC}, Zipf's caching (Zipf), and \ac{CPF}. We can see that the offloading gain under the \ac{PC} scheme attains the best performance as compared to other schemes. Also, we note that both \ac{PC} and Zipf schemes encompass the same offloading gain when $\beta=0$ owing to the uniformity of content popularity.

\begin{figure}
    \centering
    \subfigure[Case one]
    {
        \includegraphics[width=1.6in]{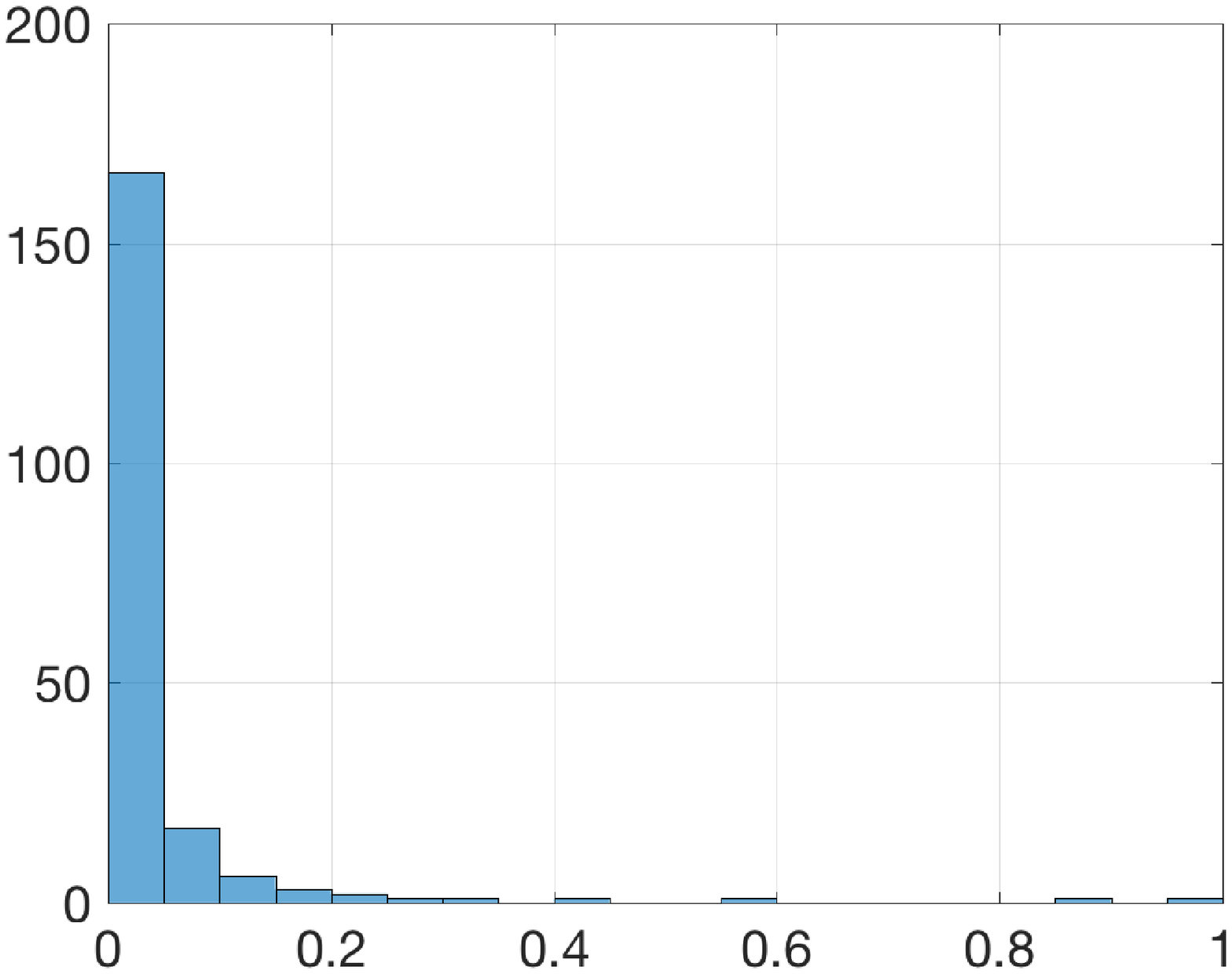}		
        \label{histogram_b_i_p_star}
    }
    \subfigure[Case two]
    {
        \includegraphics[width=1.6in]{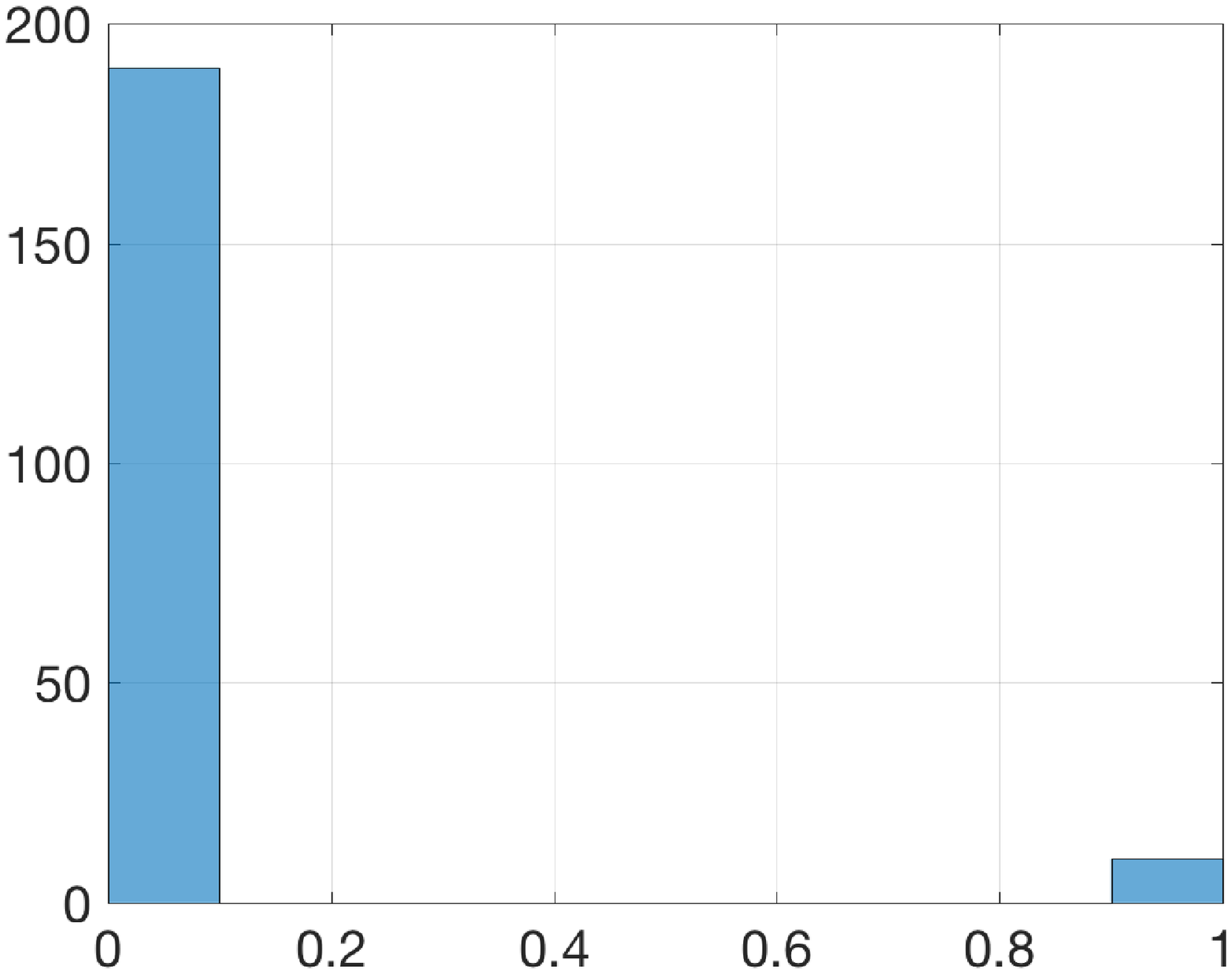}
        \label{histogram_b_i_p_leq_p_star}
    }
    \caption{Histogram of the caching solution $\underline{\boldsymbol{c}}^*$ when (a) $\sigma=\SI{10}{m}$ and $\lambda_p=20$ clusters/\SI{}{km}$^2$ and (b) $\sigma=\SI{100}{m}$ and $\lambda_p=50$ clusters/\SI{}{km}$^2$.}		
    \label{histogram_b_i}
    \vspace{-0.4cm}
\end{figure}
To show the trade-off between content diversity gain and cooperative transmission gain, in Fig.~\ref{histogram_b_i}, we plot the histogram of the caching solution $\underline{\boldsymbol{c}}^*$ for two cases. 
The first case, in Fig.~\ref{histogram_b_i_p_star}, represents a system with good transmission conditions, i.e., both $\sigma$ and $\lambda_p$ are small. The second case in Fig.~\ref{histogram_b_i_p_leq_p_star}, on contrary, is for large values of $\sigma$ and $\lambda_p$. It is clear from the histograms that the caching solution $\underline{\boldsymbol{c}}^*$ tends to be more uniform and
skewed for cases one and two, respectively. 
\textcolor{black}{This shows that the caching probability tends to achieve diverse contents cached among the devices for a system with good transmission conditions. However,  for a system with poor transmission conditions, caching redundancy is favorable to combat such adverse conditions via cooperative transmission.} 
\section{Conclusion}
In this work, we propose a cooperative transmission scheme and probabilistic caching for a clustered device-to-device (D2D) network. We first derive a closed-form expression for the offloading gain and then obtain a simpler yet tight lower bound on the offloading gain based on an upper bound on the interference power. \textcolor{black}{Considering a special case when downloading content from only one serving device,  
the offloading gain maximization problem is shown to be convex, and the caching probability maximizing  the simplified offloading gain is then obtained. We show that the network scaling parameters, e.g., $\sigma$ and $\lambda_p$, control the trade-off between content diversity gain and cooperative transmission gain. Results show up to $12\%$ increase in the offloading gain compared to the Zipf's caching technique. In the extended version of this paper \cite{amer2019joint}, we obtain an approximated yet accurate expression of the offloading gain that allows us to find the optimized caching probability.}
\begin{appendices}
\section{Proof of Lemma \ref{ch4:comp-interference}}				
\label{app:proof-comp-interference}
 \textcolor{black}{In the following, by saying $u \in \Phi_{c}$, we mean that $y \in \Phi_{c}$ where $u=\lVert x+y\rVert$.}
\begin{align}
 &\mathscr{L}_{I}(t) = \mathbb{E} \Bigg[e^{-t\gamma_d \sum_{\Phi_p^{!}} \sum_{u \in \Phi_{c}}  G_{u}  u^{-\alpha}} \Bigg] \nonumber \\ 
\label{ch4:interference-back}
  &= \mathbb{E}_{\Phi_p} \Big[\prod_{\Phi_p^{!}} \mathbb{E}_{\Phi_{c}} \prod_{u \in \Phi_{c}} \mathbb{E}_{u,G_{u}}  e^{-t\gamma_d G_{u}  u^{-\alpha}} \Big]  \\
    &\overset{(a)}{=} \mathbb{E}_{\Phi_p} \Big[\prod_{\Phi_p^{!}} \mathbb{E}_{\Phi_{c}} \prod_{u \in \Phi_{c}} \mathbb{E}_{u}\frac{1}{1+t\gamma_d u^{-\alpha}} \Big] \nonumber \\
    \label{LT_c0}
        &\overset{(b)}{=} \mathbb{E}_{\Phi_p} \Big[\prod_{\Phi_p^{!}} \sum_{l=0}^{\infty}  \Big(
        \int_{u=0}^{\infty}\frac{1}{1+t\gamma_d u^{-\alpha}} f_{U|V}(u|v)\dd{u}\Big)^{l} P(n=l) \Big] 
        \nonumber \\
        &\overset{(c)}{=} {\rm exp}\Big(-2\pi \lambda_p\int_{v=0}^{\infty}\Big(1 - \sum_{l=0}^{\infty} \nonumber \\
 &\Big(\int_{u=0}^{\infty}\frac{1}{1+t\gamma_d u^{-\alpha}} f_{U}(u|v)\dd{u}\Big)^{l} \frac{\overline{n}^le^{-n}}{l!}\Big)v\dd{v}\Big)
\end{align}   
where $G_{u}=G_{y}$ for ease of notation. (a) follows from the Rayleigh fading assumption, (b) follows from the \ac{TCP} definition with the number of devices per cluster being a Poisson \ac{RV}, and $P(n=l)= \frac{\overline{n}^le^{-\overline{n}}}{l!}$ is the probability that there are $l$ catering devices caching content $m$ per each remote cluster, with $c_m=1$ for the assumed worst case interference scenario. (c) follows from the \ac{PGFL} for a \ac{PPP}. We denote $\int_{u=0}^{\infty}\frac{1}{1+t\gamma_d u^{-\alpha}} f_{U|V}(u|v)\dd{u}$ as $\delta(v,t)$. Using the fact that (Taylor's expansion of) $e^x=\sum_{n=0}^{\infty}\frac{x^n}{n!}$, we rearrange $\mathscr{L}_{I}(t) $ in the above equation as  
\begin{align}
 \label{LT_fpgl0}
\mathscr{L}_{I}(t)&=  {\rm exp}\Big( -2\pi \lambda_p\int_{v=0}^{\infty}\big(1 - {\rm exp}\big(-\overline{n}(1 - \delta(v,t))\big)\big)v\dd{v}\Big)
\end{align}
By denoting $1 - \delta(v,t)= 1 -\int_{u=0}^{\infty}\frac{1}{1+t\gamma_d u^{-\alpha}} f_{U|V}(u|v)\dd{u}= \int_{u=0}^{\infty}\frac{t\gamma_d}{u^{\alpha}+t\gamma_d} f_{U|V}(u|v)\dd{u}$ as $\zeta(v,t)$, Lemma \ref{ch4:comp-interference} is proven.

\section{Proof of Proposition \ref{ch4:comp-interference-approx}}
\label{app:proof-comp-interference-approx}
By conditioning on $S_{\Phi_{cm}}=s_m=\sum_{i=1}^{k}  h_i^{-\alpha}$, we derive a bound on the Laplace transform of inter-cluster interference based on Taylor's series expansion. Starting from equation (\ref{ch4:interference-back}) in \App{proof-comp-interference}, we have 		
\begin{align}
 \Lc_{I}(t|k)&= \mathbb{E}_{\Phi_p} \Bigg[\prod_{\Phi_p^{!}} \mathbb{E}_{\Phi_{c}} \prod_{u \in \Phi_{c}} \mathbb{E}_{u,G_{u}}  {\rm exp}\big(-t G_{u}  u^{-\alpha}\Big) \Bigg] \nonumber 	\\ 
 &\overset{(a)}{=}\Eb_{G_{u}} e^{-2\pi \lambda_p\int_{v=0}^{\infty}\big(1 - e^{-\overline{n}(1 - \zeta'(v,t))}\big)v\dd{v}}		\nonumber	 \\ 
 &\overset{(b)}{\approx} \Eb_{G_{u}} e^{-2\pi \lambda_p\int_{v=0}^{\infty}\big(1 -  (1 - \overline{n}(1 - \zeta'(v,t))\big)v\dd{v}}		\nonumber		\\ 
    &\overset{}{=}\Eb_{G_{u}} e^{-2\pi \lambda_p \overline{n} \int_{v=0}^{\infty}(1 - \zeta'(v,t))v\dd{v}}\nonumber	\\
\label{ch4:G-of-s}
&=e^{-2\pi\overline{n}\lambda_p \big(\int_{v=0}^{\infty}v\dd{v} - \Eb_{G_{u}} \int_{v=0}^{\infty}\zeta'(v,t) v\dd{v}\big)} 
\end{align}
where  $\zeta'(v,t) = \int_{u=0}^{\infty}e^{-t\gamma_dG_{u}u^{-\alpha}} f_{U|V}(u|v)\dd{u}$ and $G_{u}= G_{y}$; (a) follows from tracking the proof of Lemma \ref{ch4:comp-interference} up until equation (\ref{LT_fpgl0}) and (b) follows from Taylor series expansion for exponential function $e^{-x} = 1 - x - \Theta(x^2)$ when $x$ is small. It is worth mentioning that the obtained $\mathscr{L}_{I}(t|k)$ in (\ref{ch4:G-of-s}) is the Laplace transform of an upper bound on the interference. Correspondingly, the resulting $\mathbb{P}(R_{m}>\rho)$ and $\mathbb{P}_{o}^{\sim}(\boldsymbol{c})$ are lower bounds on their exact values. We proved in \cite[Lemma 2]{amer2018minimizing} that $\int_{v=0}^{\infty}v\dd{v} - \Eb_{G_{u}} \int_{v=0}^{\infty}\zeta'(v,t) v\dd{v}= \frac{(t\gamma_d)^{2/\alpha}}{2} \Gamma(1 + 2/\alpha)\Gamma(1 - 2/\alpha)$, hence, Proposition \ref{ch4:comp-interference-approx} is proven.
\end{appendices}

\bibliographystyle{IEEEtran}
\bibliography{bibliography}
\end{document}